\documentclass[11pt]{article}
\usepackage{amsmath,amssymb,amsthm}
\usepackage{graphicx}
\usepackage{changes}
\usepackage{tikz}
\usepackage{pgfplots}
\usepackage{float} 

\pgfplotsset{compat=1.18} 
\newtheorem{proposition}{Proposition}
\theoremstyle{definition}

\title{ }

\author{}
\date{}

\begin{document}

\author{
Nazaria Solferino\\
Department of Economics, Statistics and Business\\
Universitas Mercatorum, Rome}

\title{An analytical model of Disequilibrium and decentralized productive Exploration}

\date{}
\maketitle

\begin{abstract}

This paper studies the economic role of persistent dispersion in allocations across agents. We develop a tractable model in which firms allocate resources under imperfect information and behavioral updating, generating sustained heterogeneity in beliefs and actions. While dispersion induces static misallocation, it also fosters decentralized experimentation, allowing the economy to explore a broader set of productive opportunities. We show that the economy converges to a stationary equilibrium with strictly positive dispersion and that, under plausible conditions, such disequilibrium can dominate the perfectly coordinated benchmark. The model provides a novel interpretation of observed dispersion in productivity and returns as reflecting both inefficiency and productive exploration. It also yields testable predictions linking dispersion to growth and innovation dynamics.

\textbf{Keywords}: Misallocation, Dispersion, Expectations, Behavioral Economics, Innovation, Information Frictions

\textbf{JEL Codes}: D83, D84, E22, O11, O41
\end{abstract}

\section{Introduction}

A central question in macroeconomics concerns the role of dispersion in economic activity. A large empirical literature documents substantial heterogeneity in productivity, investment, and marginal returns across firms and sectors, even within narrowly defined industries. This dispersion is typically interpreted as evidence of misallocation: resources are not deployed where they yield the highest returns, leading to aggregate productivity losses. Seminal contributions such as \cite{restuccia2008policy}, \cite{hsieh2009misallocation}, and \cite{midrigan2014finance} quantify large efficiency gains from eliminating such distortions, establishing misallocation as a key determinant of aggregate productivity and cross-country income differences.

At the same time, a growing body of evidence challenges the view that dispersion is purely inefficient. Micro-level studies document that high dispersion often coexists with innovation, entry, and reallocation dynamics \cite{foster2008reallocation, syverson2011productivity, haltiwanger2016firm}. Moreover, recent theoretical work shows that dispersion may arise endogenously even in efficient environments. For example, \cite{kehrig2021good} demonstrate that dispersion in marginal revenue products can increase output when firms optimally adjust capital under frictions, while \cite{buera2011finance} and \cite{moll2014productivity} highlight how heterogeneous investment responses generate persistent dispersion with non-trivial aggregate implications.

These findings point to a fundamental tension. Standard equilibrium models associate efficiency with perfect coordination and equalization of marginal returns.  Real economies exhibit persistent heterogeneity that may reflect ongoing experimentation, learning, and structural change. This perspective is closely related to classical insights by \cite{hayek1945use}, who emphasized the role of dispersed information, and \cite{schumpeter1934theory}, who highlighted the importance of experimentation and innovation. More recent work on information frictions and expectations formation further stresses how heterogeneous beliefs shape aggregate outcomes \cite{angeletos2010noisy, maćkowiak2009optimal, bordalo2018diagnostic, coibion2015information}.

Despite these advances, existing frameworks face an important limitation. Most models either interpret dispersion as a deviation from an efficient benchmark driven by distortions, or generate dispersion as an equilibrium outcome without explicitly modeling the trade-off between coordination and decentralized exploration. In particular, they struggle to capture the idea that disequilibrium may itself be a productive feature of the economy.

This paper proposes a tractable theoretical framework that places this trade-off at the center of the analysis.
We interpret the economy as populated by a large number of agents-firms, investors, or managers* who must allocate resources across a continuum of productive opportunities (e.g., technologies, sectors, or strategies). In each period, there exists an underlying fundamental that determines where returns are highest. However, this fundamental is not directly observable. Instead, agents rely on imperfect signals and form beliefs using updating rules that may deviate from full rationality.

As a result, agents make heterogeneous decisions: some overinvest in unproductive directions, while others underinvest or experiment with alternative opportunities. From a static perspective, this dispersion in actions generates misallocation and reduces efficiency. In standard models, equilibrium is characterized by full coordination: agents share common beliefs, dispersion vanishes, and resources are allocated efficiently.

The key departure in our framework is that dispersion is not purely wasteful. When agents choose different actions, the economy effectively conducts decentralized experimentation. Some agents may discover high-return opportunities that would not be explored under perfect coordination. In this sense, what appears as misallocation ex post may reflect valuable exploration ex ante.

Our model formalizes this trade-off between coordination and exploration. On the one hand, dispersion in actions leads to inefficient allocation relative to the current fundamental. On the other hand, dispersion generates informational and productive benefits that are absent in a perfectly coordinated equilibrium. By introducing persistent heterogeneity in beliefs through imperfect information and behavioral noise, the model delivers a stationary environment in which disequilibrium, understood as cross-sectional dispersion in actions, is not a transient deviation, but a structural feature of the economy.

This framework allows us to characterize the conditions under which equilibrium allocations are dominated by outcomes with persistent dispersion, thereby providing a micro-founded rationale for why economies may benefit from sustained deviations from perfect coordination.

Our main contribution is to provide a novel interpretation of misallocation measures. In our framework, observed dispersion reflects a combination of inefficient allocation and productive experimentation. This perspective helps reconcile empirical findings that link high dispersion to both low contemporaneous efficiency and high future growth, and connects to recent work on heterogeneous expectations and belief-driven fluctuations \cite{bordalo2018diagnostic, coibion2015information}.

More broadly, the paper contributes to several strands of the literature. First, it relates to the misallocation literature \cite{restuccia2008policy, hsieh2009misallocation, midrigan2014finance}. Second, it connects to the literature on heterogeneous firms and produivity dynamics \cite{foster2008reallocation, syverson2011productivity}. Third, it contributes to models of information frictions and expectations formation \cite{angeletos2010noisy, maćkowiak2009optimal}. Finally, it speaks to a growing literature on the role of heterogeneity and non-coordination in shaping aggregate outcomes.

The remainder of the paper is organized as follows. Section 2 introduces the baseline model and characterizes the stationary distribution of beliefs and allocations. Section 3 provides a  welfare comparison with  the state of equilibrium. Section 4 discusses policy implications. Section 5  concludes.

\section{Baseline Model: Disequilibrium, Misallocation, and Productive Dispersion}

This section develops a tractable framework to study how dispersed, imperfectly coordinated allocations of resources can arise endogenously and, under certain conditions, improve aggregate outcomes relative to standard equilibrium benchmarks.

For simplicity,we assume that time is discrete, $t = 0,1,2,\dots$. The economy is populated by a continuum of agents $i \in [0,1]$. Each period, a latent fundamental $\theta_t \in \mathbb{R}$ determines the location of productive opportunities (e.g., the direction of technological change or sectoral profitability).

The fundamental evolves according to the following process:
\begin{equation}
\theta_{t+1} = \rho \theta_t + \varepsilon_{t+1}, \quad |\rho| < 1, \quad \varepsilon_t \sim \mathcal{N}(0,\sigma_\varepsilon^2).
\end{equation}

Agents must choose an action $a_{i,t} \in \mathbb{R}$, which we interpret as the allocation of capital or effort across a one-dimensional space of technologies or sectors.

Agent $i$'s payoff is given by:
\begin{equation}
\pi_{i,t} = - (a_{i,t} - \theta_t)^2.
\end{equation}

This quadratic loss captures misallocation: output is maximized when agents allocate exactly at the fundamental $\theta_t$. Deviations represent inefficient allocation of resources.

Agents do not observe $\theta_t$ directly. Instead, they receive a noisy private signal:
\begin{equation}
s_{i,t} = \theta_t + \nu_{i,t}, \quad \nu_{i,t} \sim \mathcal{N}(0,\sigma_\nu^2),
\end{equation}
with $\nu_{i,t}$ i.i.d. across agents and time.

Agents also form beliefs $\hat{\theta}_{i,t}$ about the fundamental using a behavioral updating rule:
\begin{equation}
\hat{\theta}_{i,t+1} = (1-\alpha)\hat{\theta}_{i,t} + \alpha s_{i,t} + \eta_{i,t+1},
\end{equation}
where $\alpha \in (0,2)$ governs the speed of adjustment and $\eta_{i,t} \sim \mathcal{N}(0,\sigma_\eta^2)$ represents non-vanishing cognitive or behavioral noise.

This specification captures the fact that even in the limit of abundant information, idiosyncratic errors remain.

Given quadratic payoffs, optimal actions are:
\begin{equation}
a_{i,t} = \hat{\theta}_{i,t}.
\end{equation}

Thus, heterogeneity in beliefs translates directly into heterogeneity in allocations.

Define the cross-sectional mean and variance of beliefs:
\begin{equation}
m_t = \mathbb{E}[\hat{\theta}_{i,t}], \quad v_t = \mathrm{Var}(\hat{\theta}_{i,t}).
\end{equation}

The law of motion for the mean is:
\begin{equation}
m_{t+1} = (1-\alpha)m_t + \alpha \theta_t.
\end{equation}

The variance evolves according to:
\begin{equation}
v_{t+1} = (1-\alpha)^2 v_t + \alpha^2 \sigma_\nu^2 + \sigma_\eta^2.
\end{equation}

\begin{proposition}[Ergodicity]
Suppose $|\rho|<1$ and $\alpha \in (0,2)$. Then the stochastic process $(\theta_t, m_t, v_t)$ admits a unique invariant distribution. Moreover, $v_t$ converges deterministically to a unique fixed point $v^*$.
\end{proposition}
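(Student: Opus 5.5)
The argument separates the deterministic variance recursion from the stochastic linear system in $(\theta_t,m_t)$ and treats each with a contraction or stability argument.

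\textbf{The variance.} Equation for $v_t$ is an affine map $v\mapsto (1-\alpha)^2 v + c$ with $c=\alpha^2\sigma_\nu^2+\sigma_\eta^2\ge 0$. Since $\alpha\in(0,2)$, we have $|1-\alpha|<1$, so $(1-\alpha)^2<1$ and the map is a contraction on $[0,\infty)$. By the Banach fixed point theorem (or simply by iterating), there is a unique fixed point
\[
v^*=\frac{\alpha^2\sigma_\nu^2+\sigma_\eta^2}{1-(1-\alpha)^2}=\frac{\alpha^2\sigma_\nu^2+\sigma_\eta^2}{\alpha(2-\alpha)},
\]
and $v_t-v^*=(1-\alpha)^{2t}(v_0-v^*)\to 0$ geometrically for every initial $v_0$. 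No randomness enters, because the idiosyncratic terms $\nu_{i,t}$ and $\eta_{i,t}$ average out in the cross-section under a law of large numbers over the continuum. I will take this as given in the paper's derivation of the laws of motion. Hence $v_t$ converges deterministically. In any invariant distribution the $v$-component must then be the point mass at $v^*$: if $v_0$ had a nondegenerate invariant law, pushing it forward by the strict contraction would shrink its support, contradicting invariance.

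\textbf{The pair $(\theta_t,m_t)$.} Stack $x_t=(\theta_t,m_t)'$. Then $x_{t+1}=Ax_t+B\varepsilon_{t+1}$ with
\[
A=\begin{pmatrix}\rho & 0\\ \alpha & 1-\alpha\end{pmatrix},\qquad B=\begin{pmatrix}1\\0\end{pmatrix}.
\]
$A$ is lower triangular with eigenvalues $\rho$ and $1-\alpha$, both of modulus $<1$ under the hypotheses, so the spectral radius is below one. This is a stable Gaussian VAR(1) driven by i.i.d. $\mathcal N(0,\sigma_\varepsilon^2)$ shocks. Its unique invariant law is the Gaussian $\mathcal N(0,\Sigma)$, where $\Sigma=\sum_{k\ge0}A^kBB'(A')^k\sigma_\varepsilon^2$ is the unique solution of the discrete Lyapunov equation $\Sigma=A\Sigma A'+\sigma_\varepsilon^2BB'$.

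Uniqueness follows from a coupling argument. Run two copies from arbitrary initial conditions with the same shocks. Their difference evolves as $A^t(x_0-x_0')$, which tends to zero geometrically because $\|A^t\|\le C\lambda^t$ for any $\lambda\in(\max(|\rho|,|1-\alpha|),1)$. Hence any two invariant laws coincide in Wasserstein distance. Existence follows either from the explicit moving-average representation $x_t=\sum_{k\ge0}A^kB\varepsilon_{t-k}$, which converges in $L^2$, or from tightness of the laws of $x_t$ (their second moments are uniformly bounded) combined with Krylov--Bogoliubov.

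\textbf{Assembling.} The joint invariant distribution of $(\theta_t,m_t,v_t)$ is $\mathcal N(0,\Sigma)\otimes\delta_{v^*}$. The $v$-component is deterministic and decoupled from $x_t$, so the product structure holds automatically, and uniqueness of the joint law follows from uniqueness of each factor.

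The main obstacle is not analytic but interpretive. The stated law of motion for $m_t$ uses the current fundamental $\theta_t$, which is consistent with the stacked system above. The variance recursion, however, presumes that the cross-sectional noise washes out. If $\theta_t$ entered the variance or idiosyncratic noise did not average out, $v_t$ would not be deterministic. I will therefore state explicitly that I rely on the paper's law-of-large-numbers convention for the continuum of agents, and I will verify the stability of $A$ carefully at the endpoint-sensitive condition $|1-\alpha|<1$.
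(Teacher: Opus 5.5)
Your proof is correct and follows the same route as the paper's: a stable AR(1) for $\theta_t$, a stable linear recursion for $m_t$, a contraction for $v_t$, and then standard results for stable affine systems, which you make explicit through the eigenvalues of the triangular matrix $A$, the Lyapunov equation, and a coupling argument. Two justifications are slightly loose but easily repaired: the Wasserstein comparison presupposes finite first moments, and the ``shrinking support'' argument only works for compactly supported laws, but in both cases it suffices to note that $A^t x_0\to 0$ and $T^n(v)\to v^*$ pointwise, where $T(v)=(1-\alpha)^2v+\alpha^2\sigma_\nu^2+\sigma_\eta^2$, and apply dominated convergence to bounded continuous test functions.
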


\begin{proof}
The process $\theta_t$ is a stationary Gaussian AR(1). The mean $m_t$ evolves as a stable linear function of $(m_{t-1}, \theta_t)$, implying stationarity. The variance follows a deterministic linear recursion with coefficient $(1-\alpha)^2<1$, hence converges to a unique fixed point. Standard results for affine stochastic systems imply existence and uniqueness of an invariant distribution.
\end{proof}

The steady-state variance is given by:
\begin{equation}
v^* = \frac{\alpha^2 \sigma_\nu^2 + \sigma_\eta^2}{2\alpha - \alpha^2}.
\end{equation}

Importantly, even if $\sigma_\nu \to 0$, dispersion remains strictly positive whenever $\sigma_\eta > 0$.

Aggregate output is defined as:
\begin{equation}
Y_t = - \int_0^1 (a_{i,t} - \theta_t)^2 di + \gamma \Omega(v_t),
\end{equation}
where $\gamma > 0$ and $\Omega(\cdot)$ is increasing and twice differentiable.

Using the law of total variance:
\begin{equation}
\int (a_{i,t} - \theta_t)^2 di = (\bar{a}_t - \theta_t)^2 + v_t,
\end{equation}
so that:
\begin{equation}
Y_t = - (\bar{a}_t - \theta_t)^2 - v_t + \gamma \Omega(v_t).
\end{equation}

The first term captures aggregate misallocation, the second dispersion costs, and the third the benefits of decentralized experimentation.

Consider the benchmark case $\sigma_\eta = 0$. Then:
\begin{equation}
v^*_{eq} = \frac{\alpha^2 \sigma_\nu^2}{2\alpha - \alpha^2}.
\end{equation}

In the limit of perfect information ($\sigma_\nu \to 0$), dispersion vanishes:
\begin{equation}
v^*_{eq} \to 0.
\end{equation}

This corresponds to a standard equilibrium with perfect coordination and no cross-sectional heterogeneity.

\section{Disequilibrium and Welfare Comparison}

Define expected welfare:
\begin{equation}
W(v) = - v + \gamma \Omega(v).
\end{equation}

\begin{proposition}[Productive Disequilibrium]
Suppose $\Omega(\cdot)$ is increasing and there exists $v>0$ such that $\gamma \Omega(v) > v$. Then there exists $\sigma_\eta > 0$ such that:
\begin{equation}
W(v^*) > W(v^*_{eq}),
\end{equation}
i.e., an economy with persistent dispersion yields higher welfare than the equilibrium benchmark.
\end{proposition}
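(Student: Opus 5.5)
The plan is to treat $v^*$ as a function of $\sigma_\eta$ and pick $\sigma_\eta$ so that $v^*$ lands exactly on a point where $\gamma\Omega$ exceeds the dispersion cost. From the closed form following the Ergodicity proposition,
\begin{equation*}
v^*(\sigma_\eta) = \frac{\alpha^2\sigma_\nu^2 + \sigma_\eta^2}{2\alpha-\alpha^2} = v^*_{eq} + \frac{\sigma_\eta^2}{2\alpha-\alpha^2}.
\end{equation*}
Since $\alpha\in(0,2)$ gives $2\alpha-\alpha^2>0$, the map $\sigma_\eta\mapsto v^*(\sigma_\eta)$ is continuous and strictly increasing on $[0,\infty)$. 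It equals $v^*_{eq}$ at $\sigma_\eta=0$ and diverges as $\sigma_\eta\to\infty$, so every level $v\ge v^*_{eq}$ is attained by exactly one $\sigma_\eta\ge 0$.

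First I would handle the benchmark in the form the paper emphasizes, namely the perfect-information limit $\sigma_\nu\to 0$, where $v^*_{eq}=0$. Let $\bar v>0$ be a point with $\gamma\Omega(\bar v)>\bar v$, as in the hypothesis. I would set
\begin{equation*}
\sigma_\eta = \sqrt{(2\alpha-\alpha^2)\,\bar v}>0,
\end{equation*}
so that $v^*=\bar v$ and hence $W(v^*) = \gamma\Omega(\bar v)-\bar v>0$. The comparison point is $W(v^*_{eq}) = W(0) = \gamma\Omega(0)$.

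The main obstacle is this comparison. The hypothesis $\gamma\Omega(\bar v)>\bar v$ only yields $W(\bar v)>W(0)$ if $\Omega(0)\le 0$. I would therefore make explicit the normalization $\Omega(0)=0$, meaning there is no experimentation benefit without dispersion, which matches the paper's reading of $\Omega$ as the value of decentralized exploration. Under this normalization $W(0)=0<W(v^*)$, which is the claim. If one prefers not to normalize, the same argument goes through verbatim with the hypothesis read as $\gamma\Omega(\bar v)-\bar v>\gamma\Omega(0)$, i.e.\ $W(\bar v)>W(0)$.

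Finally, I would extend the argument to a fixed small $\sigma_\nu>0$. Because $\Omega$ is twice differentiable, $W$ is continuous, so $W(v^*_{eq})\to W(0)$ as $\sigma_\nu\to0$. Choose $\sigma_\nu$ small enough that $v^*_{eq}<\bar v$ and $W(v^*_{eq})<W(\bar v)$; the strict inequality $W(\bar v)>W(0)$ leaves room for this. Then take
\begin{equation*}
\sigma_\eta^2=(2\alpha-\alpha^2)(\bar v - v^*_{eq})>0,
\end{equation*}
which gives $v^*=\bar v$ and $W(v^*)>W(v^*_{eq})$. I would also remark that, by continuity of $W$, the inequality persists for all $\sigma_\eta$ in a neighborhood of this value, so the result is not knife-edge.
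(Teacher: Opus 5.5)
Your argument is correct for the statement with the normalization you add, and it takes a different and more careful route than the paper's.

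\textbf{The paper's route.} The paper's proof is a local sketch. It notes that $v^*>v^*_{eq}$ for every $\sigma_\eta>0$ and writes $W(v^*)-W(v^*_{eq})=-(v^*-v^*_{eq})+\gamma[\Omega(v^*)-\Omega(v^*_{eq})]$. It then asserts positivity for some $v^*$ slightly above $v^*_{eq}$ ``if $\gamma\Omega(v)$ dominates $v$ locally''. That step silently replaces the stated \emph{level} hypothesis $\gamma\Omega(\bar v)>\bar v$ with an \emph{increment} condition at $v^*_{eq}$ (e.g.\ $\gamma\Omega'(v^*_{eq})>1$). What the local route buys is that, under that condition, small $\sigma_\eta$ works for any $\sigma_\nu$.

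\textbf{Your route.} You keep the stated hypothesis and use the fact that $\sigma_\eta\mapsto v^*$ maps $[0,\infty)$ onto $[v^*_{eq},\infty)$ to land exactly on $\bar v$. This gives an explicit $\sigma_\eta$, and it works even if $W$ dips just above $v^*_{eq}$, where the local route fails. The price is that the level condition must be turned into an increment condition. That is exactly what $\Omega(0)\le 0$ together with $v^*_{eq}\approx 0$ provides.

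\textbf{Both extra assumptions are necessary.} You are right that the normalization is not optional. With $\gamma=1/2$ and $\Omega(v)=1+v$, the hypothesis holds on $(0,1)$, yet $W(v)=(1-v)/2$ is strictly decreasing, so no $\sigma_\eta>0$ helps. The proposition as written is therefore false. Even with $\Omega(0)=0$ the claim can fail: take $\gamma=2$ and $\Omega(v)=\log(1+v)$, where $2\log 2>1$. Then $W$ is strictly decreasing on $[1,\infty)$, so no $\sigma_\eta>0$ helps once $v^*_{eq}\ge 1$. So your restriction to small $\sigma_\nu$ is required by the statement itself, not a limitation of your method.

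One wording fix: $\Omega(0)\le 0$ is sufficient for $W(\bar v)>W(0)$, not necessary. Your phrase ``only yields \dots\ if'' overstates this.
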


\begin{proof}
Since $v^* > v^*_{eq}$ for any $\sigma_\eta > 0$, the welfare difference is:
\begin{equation}
W(v^*) - W(v^*_{eq}) = - (v^* - v^*_{eq}) + \gamma [\Omega(v^*) - \Omega(v^*_{eq})].
\end{equation}
By continuity of $\Omega$, if $\gamma \Omega(v)$ dominates $v$ locally, the expression is positive for some $v^*>v^*_{eq}$.
\end{proof}

\begin{proposition}[Optimal Disequilibrium]
If $\Omega$ is strictly concave for large $v$ and satisfies $\Omega'(0)>0$, then there exists a unique $v^{opt} > 0$ maximizing $W(v)$. This corresponds to an interior level of behavioral noise $\sigma_\eta^* > 0$.
\end{proposition}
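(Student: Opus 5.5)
The plan has three steps: existence of an interior maximizer of $W(v) = -v + \gamma\Omega(v)$ on $[0,\infty)$, uniqueness of that maximizer, and then inverting the steady-state map $\sigma_\eta \mapsto v^*$ to recover $\sigma_\eta^*$. Since the hypotheses as stated are weak, I read them as follows: $\Omega$ is twice differentiable on $[0,\infty)$; $\gamma\Omega'(0) > 1$, which is needed for $W'(0)>0$; $\Omega$ is strictly concave on all of $[0,\infty)$; and $\gamma\Omega'(v) < 1$ for $v$ large, for instance because $\Omega'(v)\to 0$. These are the natural readings of ``$\Omega'(0)>0$'' and ``strictly concave for large $v$'' that make the statement true, and I state them explicitly at the start.

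\textbf{Existence.} We have $W'(v) = -1 + \gamma\Omega'(v)$, so $W'(0) = \gamma\Omega'(0) - 1 > 0$. Hence $W$ is strictly increasing near $0$, and $v=0$ is not a maximizer. For large $v$, concavity together with $\gamma\Omega'(\bar v) < 1$ at some $\bar v$ gives $W'(v) \le W'(\bar v) < 0$ for all $v \ge \bar v$. So $W(v) \to -\infty$ linearly, and the supremum of $W$ is attained on the compact interval $[0,\bar v]$. By continuity there is a maximizer, and it lies in $(0,\bar v)$. The intermediate value theorem applied to $W'$ then gives an interior critical point $v^{opt}$ with $\gamma\Omega'(v^{opt}) = 1$.

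\textbf{Uniqueness.} On the region where $\Omega$ is strictly concave, $W'' = \gamma\Omega'' < 0$, so $W'$ is strictly decreasing and has at most one zero. I expect this to be the main obstacle. If concavity is assumed only for $v \ge \bar v$, then $W$ can have several local maxima on $[0,\bar v]$, and uniqueness genuinely fails. I would first try to close the gap by assuming global strict concavity, or equivalently strict quasi-concavity of $W$. Failing that, I would state uniqueness in the weaker sense of a unique global maximizer under a generic condition, and flag that the proposition implicitly requires concavity on $[0,\infty)$.

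\textbf{Mapping back to $\sigma_\eta^*$.} The steady-state variance is
\[
v^*(\sigma_\eta) = \frac{\alpha^2\sigma_\nu^2 + \sigma_\eta^2}{2\alpha-\alpha^2}.
\]
It is continuous and strictly increasing in $\sigma_\eta \ge 0$, with range $[v^*_{eq},\infty)$. If $v^{opt} > v^*_{eq}$, then
\[
\sigma_\eta^* = \sqrt{(2\alpha-\alpha^2)\,v^{opt} - \alpha^2\sigma_\nu^2}
\]
is the unique positive solution of $v^*(\sigma_\eta) = v^{opt}$. This condition holds automatically in the perfect-information limit $\sigma_\nu \to 0$, where $v^*_{eq}\to 0 < v^{opt}$. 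If instead $v^{opt} \le v^*_{eq}$, the constrained optimum over $\sigma_\eta$ is the corner $\sigma_\eta = 0$, so the conclusion $\sigma_\eta^*>0$ requires $\sigma_\nu^2 < (2\alpha-\alpha^2)\,v^{opt}/\alpha^2$. I would add this condition, or the limit $\sigma_\nu \to 0$, to the statement.
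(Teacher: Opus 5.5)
Your proposal is correct and follows the same route as the paper. The paper's proof argues only that $W'(0)>0$ and $W'(v)\to -1$ and then invokes continuity; the hypotheses you state explicitly ($\gamma\Omega'(0)>1$, $\gamma\Omega'(v)<1$ for large $v$, and global strict concavity for uniqueness) are exactly what that argument needs but does not state, and your inversion of $v^*(\sigma_\eta)$ under the condition $v^{opt}>v^*_{eq}$ proves the $\sigma_\eta^*>0$ claim, which the paper's proof never addresses.
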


\begin{proof}
$W'(v) = -1 + \gamma \Omega'(v)$. By assumptions, $W'(0) > 0$ and $W'(v) \to -1$ as $v \to \infty$. By continuity, there exists a unique interior maximizer.
\end{proof}

The model implies that equilibrium allocations with minimal dispersion may be inefficient when dispersion generates productive experimentation. Behavioral noise sustains heterogeneity in actions, which can improve aggregate outcomes despite static inefficiencies.

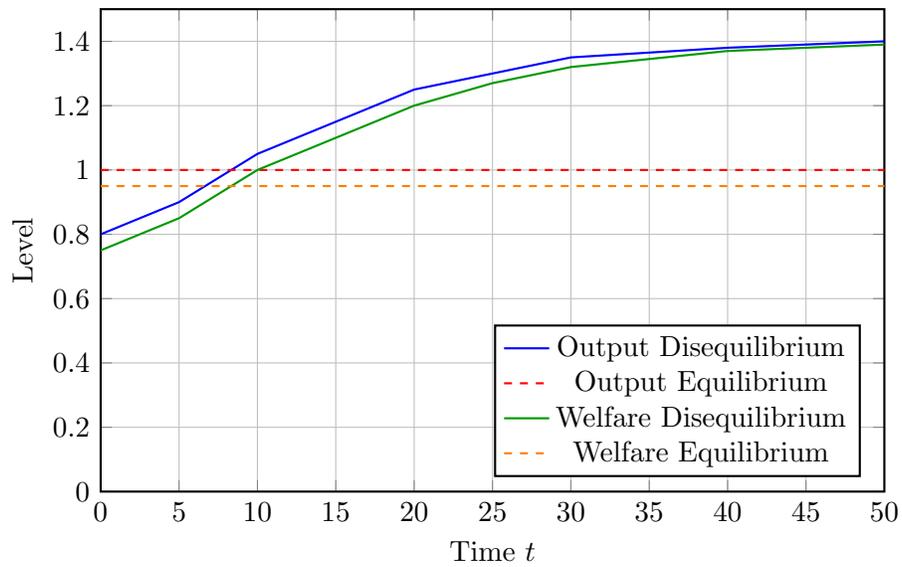
\begin{figure}[H]
\centering
\begin{tikzpicture}
\begin{axis}[
    width=12cm,
    height=8cm,
    xlabel={Time $t$},
    ylabel={Level},
    xmin=0, xmax=50,
    ymin=0, ymax=1.5,
    legend pos=south east,
    grid=both,
    thick
]

\addplot[blue, thick] coordinates {
(0,0.8)(5,0.9)(10,1.05)(15,1.15)(20,1.25)(25,1.3)(30,1.35)(40,1.38)(50,1.4)
};
\addlegendentry{Output Disequilibrium}

\addplot[red, dashed, thick] coordinates {
(0,1)(5,1)(10,1)(15,1)(20,1)(25,1)(30,1)(40,1)(50,1)
};
\addlegendentry{Output Equilibrium}

\addplot[green!60!black, thick] coordinates {
(0,0.75)(5,0.85)(10,1.0)(15,1.1)(20,1.2)(25,1.27)(30,1.32)(40,1.37)(50,1.39)
};
\addlegendentry{Welfare Disequilibrium}

\addplot[orange, dashed, thick] coordinates {
(0,0.95)(5,0.95)(10,0.95)(15,0.95)(20,0.95)(25,0.95)(30,0.95)(40,0.95)(50,0.95)
};
\addlegendentry{Welfare Equilibrium}

\end{axis}
\end{tikzpicture}
\caption{Output and welfare dynamics under equilibrium and disequilibrium. Disequilibrium initially entails lower output and welfare but surpasses equilibrium in the long run due to decentralized experimentation.}
\end{figure}

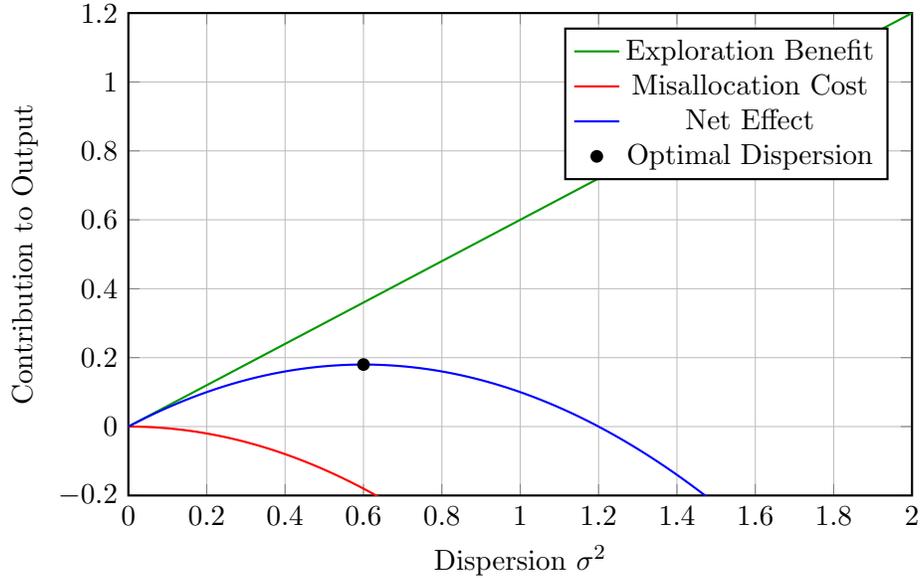
\begin{figure}[H]
\centering
\begin{tikzpicture}
\begin{axis}[
    width=12cm,
    height=8cm,
    xlabel={Dispersion $\sigma^2$},
    ylabel={Contribution to Output},
    xmin=0, xmax=2,
    ymin=-0.2, ymax=1.2,
    domain=0:2,
    samples=100,
    thick,
    legend pos=north east,
    grid=both
]

\addplot[green!60!black, thick] {0.6*x};
\addlegendentry{Exploration Benefit}

\addplot[red, thick] {-0.5*x^2};
\addlegendentry{Misallocation Cost}

\addplot[blue, thick] {0.6*x - 0.5*x^2};
\addlegendentry{Net Effect}

\addplot[only marks, mark=*, mark size=2pt, black] coordinates {(0.6,0.18)};
\addlegendentry{Optimal Dispersion}

\end{axis}
\end{tikzpicture}
\caption{Trade-off between exploration and misallocation. Aggregate output is maximized at an interior level of dispersion, illustrating that disequilibria can be beneficial.}
\end{figure}

\section{Discussion and Policy Implications}

The analysis developed in this work has several implications for how economists interpret dispersion in economic activity and, more broadly, for the design of policy interventions. A central message is that dispersion in allocations—traditionally viewed as a symptom of inefficiency—may instead reflect a fundamental trade-off between coordination and experimentation. This perspective calls for a reassessment of policies aimed at reducing heterogeneity in economic outcomes.

In standard frameworks, policy interventions that improve information, reduce frictions, or enhance coordination are unambiguously welfare improving. By contrast, in our setting, such interventions may have non-monotonic effects. Reducing informational frictions or behavioral noise compresses the dispersion of beliefs and actions, thereby improving static allocative efficiency. However, it simultaneously reduces the extent of decentralized experimentation. As a result, policies that push the economy closer to perfect coordination may inadvertently eliminate mechanisms through which new productive opportunities are discovered.

This trade-off is particularly relevant in environments characterized by rapid technological change or high uncertainty about future growth opportunities. In such contexts, a certain degree of heterogeneity in decisions may be desirable, as it ensures that different parts of the economy explore alternative directions. From this perspective, dispersion can be interpreted as a form of endogenous diversification, akin to portfolio allocation at the aggregate level. Policies that excessively standardize behavior—through rigid regulations, uniform incentives, or overly precise guidance—may therefore reduce the economy’s ability to adapt and innovate.

The model also has implications for the interpretation of misallocation measures. Empirical indicators based on dispersion in marginal products or returns are often used to quantify inefficiencies and to guide policy recommendations. Our framework suggests that such measures may conflate two distinct components: inefficient allocation and productive experimentation. As a result, a reduction in measured dispersion is not necessarily welfare improving, and policies targeting dispersion should be evaluated in light of their effects on both dimensions. This insight is particularly relevant for development and industrial policy, where efforts to reallocate resources toward high-productivity firms may come at the cost of reducing experimentation and entry.

Another implication concerns information policy. A large body of work emphasizes the benefits of improving the accuracy and dissemination of information, for example through transparency policies, disclosure requirements, or data provision. While such policies reduce uncertainty and improve coordination, our analysis highlights a potential downside: by aligning beliefs too closely, they may reduce the diversity of actions in the economy. This suggests that the optimal design of information policies may involve a trade-off between precision and diversity, rather than a simple objective of maximizing informational accuracy.

The framework also speaks to the role of behavioral biases and non-rational expectations. While these are typically viewed as distortions that should be corrected, our results indicate that some forms of behavioral noise may have beneficial aggregate effects by sustaining heterogeneity. This does not imply that all biases are desirable, but rather that their aggregate consequences depend on the environment. In particular, when the benefits of experimentation are large, eliminating behavioral heterogeneity may reduce welfare. This perspective provides a more nuanced view of behavioral interventions and suggests that policies aimed at “debiasing” agents should take into account their impact on aggregate exploration.

Finally, the model has implications for macroeconomic stabilization policy. In environments with high dispersion, aggregate outcomes may be more resilient to shocks, as different agents respond in diverse ways. Conversely, highly coordinated economies may be more fragile, as they concentrate resources in a narrow set of activities. This suggests that some degree of heterogeneity may enhance robustness, providing a form of insurance against model misspecification or unforeseen shocks.

Overall, the analysis highlights that the welfare effects of dispersion depend critically on the balance between its costs and benefits. Policies that focus exclusively on reducing misallocation may overlook the role of dispersion as a driver of experimentation and long-run growth. A more comprehensive approach should recognize this dual role and aim to design interventions that preserve beneficial heterogeneity while mitigating its most harmful aspects.

\section{Conclusion}

This paper develops a theoretical framework to study the role of persistent dispersion in economic activity. Departing from standard equilibrium models, we introduce a setting in which agents form beliefs under imperfect information and behavioral updating, generating sustained heterogeneity in actions. This heterogeneity gives rise to a fundamental trade-off: while dispersion leads to misallocation relative to the current fundamental, it also enables decentralized experimentation and the exploration of new productive opportunities.

We show that the economy converges to a stationary equilibrium with strictly positive dispersion and that, under plausible conditions, such disequilibrium can dominate the perfectly coordinated benchmark. In this sense, the model provides a formal rationale for why economies may benefit from sustained deviations from full coordination. Rather than being purely a source of inefficiency, dispersion emerges as an integral component of the economic system, supporting adaptation, innovation, and long-run growth.

The framework also offers a new perspective on empirical measures of misallocation, suggesting that observed dispersion reflects a combination of inefficiency and productive exploration. This interpretation helps reconcile evidence linking dispersion to both low contemporaneous productivity and high future growth, and highlights the importance of distinguishing between different sources of heterogeneity.

More broadly, the paper contributes to a growing literature that reconsiders the role of heterogeneity and disequilibrium in economic systems. By emphasizing the productive aspects of dispersion, it challenges the traditional view that efficiency requires full coordination, and instead highlights the potential benefits of maintaining a diverse and decentralized economic structure. 
The framework developed in this paper opens several avenues for future research. Studying the interaction between policy interventions and endogenous disequilibrium could yield practical guidance for designing policies that balance coordination with the benefits of decentralized exploration.

\end{document}